\documentclass{article}
\usepackage[utf8]{inputenc}
\usepackage{authblk}

\usepackage{url}

\usepackage{breakurl}
\usepackage[breaklinks]{hyperref}

\title{Superlinear Lower Bounds Based on ETH}
\author[1]{Andr\'as Z. Salamon}
\author[2]{Michael Wehar}
\affil[1]{
  School of Computer Science, University of St Andrews, UK
  \href{mailto:Andras.Salamon@st-andrews.ac.uk}{Andras.Salamon@st-andrews.ac.uk}
}
\affil[2]{
  Computer Science Department, Swarthmore College, USA
  \href{mailto:mwehar1@swarthmore.edu}{mwehar1@swarthmore.edu}
}
\date{\today}

\usepackage{amsmath,amssymb,xspace}
\usepackage[textsize=small]{todonotes}

\usepackage{amsfonts}
\usepackage{amsthm}
\usepackage{color}

\newtheorem{proposition}{Proposition}
\newtheorem{lemma}[proposition]{Lemma}
\newtheorem{theorem}[proposition]{Theorem}
\newtheorem{corollary}[proposition]{Corollary}

\newtheorem{question}[proposition]{Question}
\newtheorem{definition}[proposition]{Definition}
\newtheorem{remark}[proposition]{Remark}
\newtheorem{example}[proposition]{Example}

\newcommand{\shortdash}{\hbox{-}}

\newcommand{\set}[2]{\{ \; #1 \; \mid \; #2 \; \}}

\newcommand{\gsum}[3]{\Sigma_{i=#1}^{#2}#3^i}

\newcommand{\poly}{\mathrm{poly}}
\newcommand{\polylog}{\mathrm{polylog}}

\newcommand{\cc}[1]{\ensuremath{\mathsf{#1}}}
\newcommand{\dtime}{\cc{DTIME}}
\newcommand{\dtiwi}{\cc{DTIWI}}

\newcommand{\ntigu}{\cc{NTIGU}}
\newcommand{\ntimeguess}{\cc{NTIMEGUESS}}

\newcommand{\ntime}{\ensuremath{\cc{NTIME}}\xspace}
\newcommand{\PTIME}{\ensuremath{\cc{P}}\xspace}
\newcommand{\NPTIME}{\ensuremath{\cc{NP}}\xspace}

\newcommand{\fpt}{\cc{FPT}}
\newcommand{\whier}{\cc{W}}
\newcommand{\WHier}[1]{\cc{W [ #1 ] }}

\newcommand{\sat}{\mathrm{SAT}}

\newcommand{\cnfsat}[1]{#1{\text -}\mathrm{CNF}{\text -}\mathrm{SAT}}
\newcommand{\circsat}{\mathrm{CircuitSAT}}
\newcommand{\logcircsat}{\mathrm{log\shortdash \mathrm{CircuitSAT}}}

\newcommand{\ovp}{\mathrm{OVP}}

\newcommand{\threesum}{\mathrm{3SUM}}
\newcommand{\trianglefinding}{\mathrm{Triangle \, Finding}}
\newcommand{\clique}{\mathrm{Clique}}

\newcommand{\apsp}{\mathrm{APSP}}
\newcommand{\eps}{\varepsilon}

\newcommand{\ethypo}{\mathrm{ETH}}
\newcommand{\sethypo}{\mathrm{SETH}}

\newcommand{\sub}{sub}
\newcommand{\SUB}[1]{\text{SUB}(#1)}
\newcommand{\Clo}[1]{\text{Clo}(#1)}
\newcommand{\len}[1]{\vert #1 \vert}
\newcommand{\pcount}[1]{\#_p ( #1 )}
\newcommand{\N}{\mathbb{N}}

\newcommand{\zap}[1]{}

\makeatletter
\def\blfootnote{\gdef\@thefnmark{}\@footnotetext}
\makeatother


\begin{document}

\maketitle

\begin{abstract}
We introduce techniques for proving superlinear conditional lower bounds for polynomial time problems.
In particular, we show that $\circsat$ for circuits with $m$ gates and $\log(m)$ inputs (denoted by $\logcircsat$) is not decidable in essentially-linear time unless the exponential time hypothesis ($\ethypo$) is false and $k$-$\clique$ is decidable in essentially-linear time in terms of the graph's size for all fixed $k$.
Such conditional lower bounds have previously only been demonstrated relative to the strong exponential time hypothesis (SETH).
Our results therefore offer significant progress towards proving unconditional superlinear time complexity lower bounds for natural problems in polynomial time.
\end{abstract}


\blfootnote{* Andr\'as Z. Salamon acknowledges support from EPSRC grants EP/P015638/1 and EP/V027182/1.}

\section{Introduction}
\subsection{Motivation}
Developing a deeper understanding of polynomial time problems is essential to the fields of algorithm design and computational complexity theory.
In this work, we build on prior concepts from the topic of limited nondeterminism to show a new kind of conditional lower bound for polynomial time problems where a small runtime improvement for one problem would lead to a substantial runtime improvement for another.

We proceed by introducing basic notions and explaining how they relate to existing work.
A polynomial time problem is a decision problem that can be decided in $O(n^k)$ time for some constant $k$, where $n$ denotes the input length.  As usual, \cc{P} denotes the class of polynomial time problems. A decision problem has an unconditional time complexity lower bound $t(n)$ if it cannot be decided in $o(t(n))$ time.  Polynomial time problems with unconditional superlinear time complexity lower bounds do not commonly appear in complexity theory research (aside from problems with lower bounds based on restrictive models such as one-tape Turing machines~\cite{Hennie1965:one-tape,Maass:1984}).  Such problems are known to exist by the deterministic time hierarchy theorem \cite{Hartmanis1965:computational}, but to the best of our knowledge, there are few examples that appear in the literature.  Most of the known examples are related to pebbling games \cite{Kasai:1984} or intersection non-emptiness for automata \cite{Wehar:2015,Wehar:2016}.  For these examples, the unconditional lower bounds are proven by combining Turing machine simulations with classical diagonalization arguments.

Although unconditional lower bounds are rare, many polynomial time problems have been shown to have conditional lower bounds in recent works on fine-grained complexity theory (see surveys \cite{Virginia:2015,Bringmann:2019}).
Our primary goal is to introduce superlinear conditional lower bounds based on weaker hypotheses than existing works, by applying new relationships between deterministic and nondeterministic computations.

\subsection{Our Contribution}
In this work, all logarithms are base 2 and we say that a problem is solvable in essentially-linear time if it is decidable in $O(n^{1+\eps})$ time
for all $\eps > 0$.

The $\logcircsat$ decision problem (previously investigated in \cite{Buss:1993,Abrahamson:1995}) is a natural restriction of circuit satisfiability to bounded fan-in Boolean circuits with $m$ gates and $\log(m)$ inputs.
Like many problems in polynomial time, it is not currently known if unconditional superlinear time complexity lower bounds exist for $\logcircsat$.
We prove a superlinear conditional lower bound for $\logcircsat$, as our main contribution.
(Our conditional lower bound is in fact superquasilinear, where $f$ is a quasilinear function if $f(n)/n \in \polylog(n)$.)
In particular, we show in Theorem~\ref{theorem:cond-lower-bound2} that $\logcircsat$ is not decidable in essentially-linear time unless the Exponential Time Hypothesis ($\ethypo$) is false.
This result is significant because existing works have only obtained conditional lower bounds for polynomial time problems based on the Strong Exponential Time Hypothesis ($\sethypo$).
It is well known that $\sethypo$ implies $\ethypo$ but the reverse implication is not known to hold~\cite[Theorem 14.5]{Cygan2015:parameterized}.
In fact, it has been claimed that while $\ethypo$ is plausible, $\sethypo$ ``is regarded by many as a quite doubtful working hypothesis that can be refuted at any time''~\cite[p.~470]{Cygan2015:parameterized}.
We therefore believe that a conditional lower bound for a natural polynomial time problem based on $\ethypo$ instead of $\sethypo$ represents significant progress.

As a further contribution, in Theorem~\ref{theorem:clique-speedup} we show that $\logcircsat$ is not decidable in essentially-linear time unless $k$-$\clique$ is decidable in essentially-linear time in terms of the graph's size for all fixed $k$.
This result is significant because the current best known algorithm for deciding $k$-$\clique$ runs in $O(v^{0.792 k})$ time \cite{Williams2009:clique} where $v$ denotes the number of vertices.
Furthermore, showing that there exists a constant $c$ such that \mbox{$k$-$\clique$} is decidable in $O(v^{c})$ time for all fixed $k$ would constitute a major breakthrough.

Our results for $\logcircsat$ follow from Speed-up Theorems~\ref{theorem:iterative1},~\ref{theorem:speedup}, and~\ref{theorem:parameterized-speedup}.  These theorems show how a small runtime improvement for the deterministic simulation of nondeterministic machines with short witnesses would imply a substantial runtime improvement for the deterministic simulation of nondeterministic machines with large witnesses.  Furthermore, these results advance our knowledge of limited nondeterminism by exploring possible trade-offs between time and witness length.

Our techniques are straightforward adaptations of existing approaches to simulation; our contribution is a more detailed analysis of these simulations and how they behave when composed and iterated.
This more detailed analysis is made possible by our novel approach to limited nondeterminism in Section~\ref{sec:newmodel}.

\section{Background}

Let $\N$ denote the set of positive integers $\{ 1, 2, \dots \}$.
The class of polynomial functions is
\[
  \poly(n) = \bigcup_{c>0} \set{ f(n) \colon \N \to \N }{ f(n) = O(n^c) },
\]
and in a slight abuse of notation, we also sometimes use $\poly(n)$ to mean an arbitrary function from this class.
We also refer to the class of polylogarithmic functions
\[
  \polylog(n) = \bigcup_{c>0} \set{ f(n) \colon \N \to \N }{ f(n) = O((\log( n))^c) }.
\]

\subsection{Conditional Lower Bounds}
Fine-grained complexity theory is a subject focused on exact runtime bounds and conditional lower bounds.
A conditional lower bound for a polynomial time problem typically takes the following form: polynomial time problem $A$ is not decidable in $O(n^{\alpha-\eps})$ time for all $\eps > 0$ assuming that problem $B$ is not decidable in $O(t(n)^{\beta-\eps})$ time for all $\eps > 0$, where $\alpha$ and $\beta$ are constants and $t(n)$ is a function (typically $t(n)$ is either polynomial or exponential).  This is referred to as a conditional lower bound because problem $A$ has a lower bound under the assumption that $B$ has a lower bound.  Conditional lower bounds are known for many polynomial time problems including \mbox{$\trianglefinding$}, Orthogonal Vectors Problem ($\ovp$), $\threesum$, and All Pairs Shortest Path ($\apsp$)~\cite{Virginia:2015,Bringmann:2019}.

\subsection{Exponential Time Hypothesis}

Decision problems related to Boolean formulas have been significant to the study of computational hardness \cite{Cook:1971,Levin:1973}.
As a result, satisfiability of Boolean formulas ($\sat$) is a natural candidate for lower bound assumptions.  In particular, it is common to focus on satisfiability of Boolean formulas in conjunctive normal form with clause width at most $k$ (denoted by $\cnfsat{k}$) for a fixed $k$.

The exponential time hypothesis ($\ethypo$) states that there is some $\eps > 0$, such that $\cnfsat{3}$ cannot be decided in $\poly(n) \cdot 2^{\eps \cdot v}$ time, where $n$ denotes the input size and $v$ denotes the number of variables \cite{ImpagliazzoOne:2001}.  The strong exponential time hypothesis ($\sethypo$) states that for every $\eps > 0$, there is a sufficiently large $k$ such that $\cnfsat{k}$ cannot be decided in $\poly(n) \cdot 2^{(1 - \eps) \cdot v}$ time \cite{ImpagliazzoOne:2001,ImpagliazzoTwo:2001,Impagliazzo:2009}.


Conditional lower bounds are frequently shown relative to $\cnfsat{k}$.  For instance, it is well known that the Orthogonal Vectors Problem ($\ovp$) on polylogarithmic length vectors is not decidable in $O(n^{2 - \eps})$ time for all $\eps > 0$ assuming $\sethypo$~\cite{Williams:2005,Virginia:2015,Bringmann:2019}.

\begin{remark}
As far as we know, the current best reduction shows that an $O(n^{\alpha})$ time algorithm for $\ovp$ would lead to a $\poly(n) \cdot 2^{\frac{\alpha \cdot v}{2}}$ time algorithm for $\cnfsat{k}$ for all $k$ \cite{Williams:2005,Virginia:2015,Bringmann:2019}.
This isn't sufficient to show a lower bound conditional on $\ethypo$.
Furthermore, we do not know if the existence of an essentially-linear time algorithm for $\ovp$ would imply that $\ethypo$ is false.
\end{remark}

\subsection{Limited Nondeterminism}

A nondeterministic polynomial time problem is a problem that can be decided in polynomial time by a nondeterministic machine.  Nondeterminism can appear in a computation in multiple different ways.  For instance, a machine could have nondeterministic bits written on a tape in advance, or could make nondeterministic guesses during the computation.  Such variations do not appear to make much difference for nondeterministic polynomial time ($\NPTIME$).  However, the definitions do require special care for notions of limited nondeterminism, which refers to the restriction or bounding of the amount of nondeterminism in a computation.

We proceed by reviewing some prior models of limited nondeterminism.
Consider a machine model consisting of a multitape Turing machine with a special guess tape; all the remaining tapes are standard.
In this model, the number of bits of nondeterminism used by the machine is the number of cells of the guess tape that are accessed by the machine during a computation, multiplied by the number of bits represented by each cell.
The contents of the guess tape are referred to as the \emph{witness}.

Kintala and Fischer~\cite{Kintala1977:computations,Kintala1980:refining} defined $\PTIME_{f(n)}$ as the class of languages that can be decided by a polynomial time bounded machine which scans at most $f(n)$ cells of the guess tape for each input of size $n$.
Note that this concept uses an exact limit for the amount of nondeterminism.
Abandoning this exactness,
\`{A}lvarez, D\'{\i}az and Tor\'an~\cite{Alvarez1989:complexity,Diaz1990:classes}, making explicit a concept of Xu, Doner, and Book~\cite{Xu1983:refining}, then defined $\beta_k$ as the class of languages that can be decided by a polynomial-time bounded machine which uses at most $O((\log(n))^k)$ bits of nondeterminism.
Farr took a similar approach~\cite{Farr1986:topics}, defining $f(n)$-$\NPTIME$ as the languages that can be decided by a polynomial-time bounded machine which scans at most $f(q(n))$ cells of the guess tape for each input of size $n$.
Here $q$ is a polynomial that depends only on the machine, so again there is an unspecified constant factor allowed in the amount of nondeterminism.
Note that $f(n)$-$\NPTIME$ is the union over all $k$ of the classes $\PTIME_{f(n^k)}$.
Another related approach was taken by Buss and Goldsmith~\cite{Buss:1993} where $\cc{N}^m \cc{P}_l$ is defined as the class of languages decided by nondeterministic machines in quasi-$n^l$ time making at most $m \cdot \log(n)$ nondeterministic guesses.
In this approach the limit on the amount of nondeterminism is exact, but arbitrary poly-logarithmic factors are allowed in the time bound.
Finally, in the survey by Goldsmith, Levy and Mundhenk~\cite{Goldsmith1996:limited} the $\beta_k$ classes were then extended to verifiers other than those with a polynomial time bound.
In this notation, $\beta_k\cc{-C}$ is defined relative to a complexity class $\cc{C}$ that bounds the power of the verifier.  Therefore, we have $\beta_k = \beta_k\cc{-\PTIME}$.

Taking a slightly different approach, Cai and Chen~\cite{Cai1997:amount} focused on machines that partition access to nondeterminism, by first creating the contents of the guess tape, and then using a deterministic machine to check this guess.
In this terminology, $\cc{GC}(s(n),\cc{C})$ is the class of languages that can be decided by a machine that guesses $O(s(n))$ bits and then uses the power of class $\cc{C}$ to verify.
Again an arbitrary constant factor is allowed in the number of nondeterministic bits, to allow classes to contain complete languages.

Santhanam~\cite{Santhanam:2000} then returned to a definition that uses an exact limit for the amount of allowed nondeterminism: $\ntigu(t(n), g(n))$ is the class of languages that can be decided by a machine that makes $g(n)$ guesses and runs for $O(t(n))$ time.
These classes have also been denoted $\ntimeguess(t(n), g(n))$ in a more recent work~\cite{fortnow:2016}.

It follows from the definitions that
\[
  \PTIME = \PTIME_{O(\log(n))} = \ntigu(\poly(n), O(\log(n))) = \cc{GC}(O(1),\PTIME) = \cc{N}^{O(1)} \cc{P}_{O(1)} = \beta_1 
\]
and
\[
  \NPTIME = n{\text -}\NPTIME  = \PTIME_{n^{O(1)}} = \ntigu(\poly(n), \poly(n)) = \cc{GC}(n^{O(1)},\PTIME).
\]
Furthermore, the $\beta_k$ classes are meant to capture classes between $\PTIME$ and $\NPTIME$.

\begin{remark}
In this work, we focus on the $\logcircsat$ problem and the levels within $\PTIME = \beta_1$.  It is worth noting that a loosely related work \cite{Feige1997:limited} investigated the $\log$-$\clique$ problem which is in $\beta_2 = \ntigu(\poly(n),O(\log(n)^2))$.
\end{remark}






\section{Time-Witness Trade-offs}

In the following, we introduce a new notion of limited nondeterminism that we use to prove new relationships between deterministic and nondeterministic computations.  In particular, we prove that if faster deterministic algorithms exist, then there are straightforward trade-offs between time and witness length.
For our notion of limited nondeterminism, unlike existing models, the nondeterministic guesses are preallocated as placeholders within an input string.  These placeholders can then be filled with nondeterministic bits.
It is important to note that different models of limited nondeterminism could be used.  However, our model allows us to preserve the input size enabling us to prove technical results, Lemmas~\ref{lemma:witness} and \ref{lemma:padding}.
Attempting to prove a result like Lemma~\ref{lemma:witness} for a model such as $\ntigu$ introduces unnecessary challenges with managing input and guess strings.

\subsection{A New Model for Limited Nondeterminism}
\label{sec:newmodel}

The attempts at constructing robust classes containing complete problems by allowing arbitrary factors in the amount of nondeterminism were challenged by the various downward collapses of the $\beta$ hierarchy shown by Beigel and Goldsmith relative to oracles~\cite{Beigel1998:downward}.
We therefore need a notion of limited nondeterminism that tracks constant factors in the amount of nondeterminism, accepting a lack of complete problems in our complexity classes to gain greater precision in reductions.
This suggests using the $\ntigu$ notation.

However, we found that attempting to use the $\ntigu$ notion directly leads to difficulties with bookkeeping when composing multiple reductions because of the necessary simultaneous management of input and guess strings.
Since composing reductions is at the heart of our approach for proving speed-up theorems in Subsection~\ref{subsection:speedup},
we sought a different notion that overcomes these unnecessary technical obstacles.

We now introduce our model of limited nondeterminism which allows us to be more explicit than the $\ntigu$ classes in keeping track of the witness bits when composing reductions.
Reminiscent of the Cai and Chen guess-and-check classes, in our model the nondeterministic guesses will be preallocated as placeholder characters within an input string.  This means that we can only fill in placeholder characters with nondeterministic bits.  This property is essential for proving structural properties (see the translation and padding lemmas in Subsection~\ref{subsection:structural-properties}).  With other models, proofs of structural properties appear to be intrinsically more complex, requiring separate treatment of various overheads and applications of tape reduction theorems.

Consider strings over a ternary alphabet $\Sigma = \{ 0, 1, p \}$ where $p$ is referred to as the placeholder character.
We index the bits of a string starting from position $0$.
For any string $x \in \Sigma^{*}$, we let $\len{x}$ denote the length of $x$ and $\pcount{x}$ denote the number of placeholder character occurrences in $x$.

\begin{definition}
Let a string $r \in \{0, 1\}^{*}$ be given.  Define a function
\[
  \sub_{r}: \Sigma^{*} \rightarrow \Sigma^{*}
\]
such that for each string $x \in \Sigma^{*}$, $\sub_{r}(x)$ is obtained from $x$ by replacing placeholder characters with bits from $r$ so that the $i$th placeholder character from $x$ is replaced by the $i$th bit of $r$ for all $i$ satisfying $0 \leq i < min\{\len{r}, \pcount{x}\}$.  Also, define $\SUB{n} := \set{\sub_{r}}{\len{r} \leq n}$.  We call $\sub_r$ a prefix filling and $\SUB{n}$ a set of prefix fillings.
\end{definition}

\begin{example}
Consider strings $x = 11p01p0p$ and $r = 0110$.  By applying the preceding definition, we have that $\sub_{r}(x) = 11001101$.
\end{example}

A prefix filling $\sub_r$ replaces the first $\len{r}$ placeholder characters with the bits of $r$ in order (from the least index to the greatest).
If there are fewer placeholder characters then some of the bits of $r$ remain unused.
We also consider the notion of an \emph{unrestricted filling}, which is any injective replacement of placeholder characters, without specifying the particular order.

\begin{definition}
For strings $x$ and $y \in \Sigma^{*}$, we write $x \preceq y$ if $x$ can be obtained from $y$ by replacing any number of placeholder characters in $y$ with 0 or 1.
Given a language $L \subseteq \Sigma^{*}$, we let the closure of $L$ under unrestricted fillings be
\[
  \Clo{L} := \set{x \in \Sigma^{*}}{(\exists \; y \in L) \; x \preceq y}.
\]
\end{definition}

\begin{example}
Consider a language $L = \{ 0p1p \}$.  By applying the preceding definition, we have that $\Clo{L} = \{ 0p1p, 0p10, 0p11, 001p, 011p, 0010, 0011, 0110, 0111 \}$.
\end{example}

In the following, $\dtime(t(n))$ represents the class of languages decidable in $O(t(n))$ time by multitape Turing machines that have read and write access to all tapes (including the input tape).
We proceed by defining a complexity class $\dtiwi(t(n), w(n))$ where intuitively $t(n)$ represents a time bound and $w(n)$ represents a bound on witness length.

\begin{definition}
Let $\Sigma = \{0,1,p\}$ and consider a language $L \subseteq \Sigma^{*}$.  We write
\[
  L \in \dtiwi(t(n), w(n))
\]
if there exist languages $U$ and $V \subseteq \Sigma^{*}$ satisfying the following properties:
\begin{itemize}
\item $U \in \dtime(n)$,
\item $\Clo{U} \in \dtime(n)$,
\item $V \in \dtime(t(n))$, and
\item for all $x \in \Sigma^{*}$, $x \in L$ if and only if $x \in U$ and there exists \mbox{$s \in \SUB{w(\len{x})}$} such that $s(x) \in V$.
\end{itemize}
We refer to $V$ as a verification language for $L$ with input string universe $U$.
\end{definition}



There are many different ways to encode structures as strings over a fixed alphabet, so decision problems can take many different forms as formal languages.  To put a problem within $\dtiwi(t(n), w(n))$, we therefore need to provide an encoding for its inputs with placeholder characters at the appropriate positions.



\begin{example}\label{example:sat}
$\sat$ can be represented such that each input has placeholder characters out front followed by an encoding of a Boolean formula.  Each variable is represented as a binary number representing an index to a placeholder.  The placeholders will be nondeterministically filled to create a variable assignment.
\end{example}


\subsection{Structural Properties of Limited Nondeterminism}\label{subsection:structural-properties}
The following two lemmas demonstrate structural properties relating time and witness length.  These properties will be essential to proving speed-up theorems in Subsection~\ref{subsection:speedup} that reveal new relationships between deterministic and nondeterministic computations.

\begin{lemma}[Translation Lemma]\label{lemma:witness}
If $\dtiwi(t(n),w(n)) \subseteq \dtime(t'(n))$, then for all $w'$,
\[
  \dtiwi(t(n),w(n) + w'(n)) \subseteq \dtiwi(t'(n), w'(n)).
\]
\end{lemma}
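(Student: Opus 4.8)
The plan is to unfold the definition of $\dtiwi(t(n), w(n) + w'(n))$ and reorganize the two-phase filling so that the inner phase becomes a verification language handled by the hypothesis. Suppose $L \in \dtiwi(t(n), w(n) + w'(n))$ via languages $U$ and $V$: we have $U, \Clo{U} \in \dtime(O(n))$, $V \in \dtime(O(t(n)))$, and $x \in L$ iff $x \in U$ and there is $s \in \SUB{w(\len{x}) + w'(\len{x})}$ with $s(x) \in V$. A prefix filling of span up to $w(n) + w'(n)$ fills placeholders strictly left to right, so it factors as: first fill the first $w'(\len{x})$ placeholders (a prefix filling of span $\le w'(n)$), then fill the next $w(\len{x})$ placeholders. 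The idea is to make the \emph{second} group the nondeterminism absorbed by the hypothesis $\dtiwi(t(n),w(n)) \subseteq \dtime(t'(n))$, and leave the \emph{first} group as the outer witness of the target class $\dtiwi(t'(n), w'(n))$.

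Concretely, first I would define an auxiliary language $M := \set{z \in \Sigma^{*}}{z \in \Clo{U} \text{ and } (\exists s \in \SUB{w(\len{z})})\; s(z) \in V}$. The claim is that $M \in \dtiwi(t(n), w(n))$ with input universe $\Clo{U}$ and verification language $V$: indeed $\Clo{U} \in \dtime(O(n))$ by hypothesis, $\Clo{\Clo{U}} = \Clo{U}$ is also in $\dtime(O(n))$ (unrestricted filling is idempotent), and $V \in \dtime(O(t(n)))$. Here I need $w$ to behave well under the length of partially-filled strings, but since prefix filling does not change string length, $\len{z} = \len{x}$ throughout, so this is unproblematic. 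Applying the hypothesis gives $M \in \dtime(t'(n))$. Then I would show $L \in \dtiwi(t'(n), w'(n))$ using input universe $U$ and verification language $M$: the requirements $U, \Clo{U} \in \dtime(O(n))$ hold by assumption, $M \in \dtime(O(t'(n)))$ was just established, and the equivalence ``$x \in L$ iff $x \in U$ and $\exists s' \in \SUB{w'(\len{x})}\; s'(x) \in M$'' follows from the left-to-right factorization of prefix fillings: applying $s'$ first fills the initial $\le w'(\len{x})$ placeholders, and membership of $s'(x)$ in $M$ then asks for a further prefix filling of span $\le w(\len{x})$ landing in $V$, and the composition of these two prefix fillings is exactly a prefix filling of span $\le w(\len{x}) + w'(\len{x})$.

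The main obstacle is verifying this compositionality of prefix fillings carefully: I must check that composing a prefix filling $s_r$ of length $\len{r} \le w'(\len{x})$ with a prefix filling $s_{r''}$ of length $\le w(\len{x})$ applied to $s_r(x)$ yields precisely the prefix filling $s_{r \cdot r''}$ of $x$, and conversely that every prefix filling of span $\le w(\len{x}) + w'(\len{x})$ arises this way. This hinges on the facts that $s_r$ consumes exactly the first $\min\{\len{r}, \pcount{x}\}$ placeholders of $x$ in order, that $\pcount{s_r(x)} = \pcount{x} - \min\{\len{r},\pcount{x}\}$, and that the bits used are read off $r$ then $r''$ consecutively — matching how $s_{r \cdot r''}$ reads from the concatenation. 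A minor subtlety is the boundary case where $x$ has fewer than $w'(\len{x})$ placeholders, so that the second phase sees none; here both sides degenerate consistently because $\SUB{n}$ includes $\sub_r$ for all $\len{r} \le n$, in particular short $r$. I would also note where ``efficiently computable $w'$'' is used: it is not actually needed to define $M$ or $L$'s membership, but it ensures the resulting class $\dtiwi(t'(n), w'(n))$ is a sensible object (the span bound must be computable for the definition of a prefix-filling class to be meaningful); I would remark on this rather than belabor it.
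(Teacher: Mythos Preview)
Your proposal is correct and follows essentially the same approach as the paper: your auxiliary language $M$ is exactly the paper's $L'$, placed in $\dtiwi(t(n),w(n))$ via input universe $\Clo{U}$ and verification language $V$, then used as the verification language for $L$ with input universe $U$. You spell out more carefully than the paper does the idempotency $\Clo{\Clo{U}} = \Clo{U}$ and the compositionality $s_{r''}\circ s_r = s_{r\cdot r''}$ of prefix fillings, which the paper leaves implicit.
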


\begin{proof}
Suppose that $\dtiwi(t(n),w(n)) \subseteq \dtime(t'(n))$.

Let a function $w'$ be given.  Let $L \in \dtiwi(t(n),w(n) + w'(n))$ be given.  By definition, there exist an input string universe $U$ and a verification language $V \in \dtime(t(n))$ satisfying that $\forall x \in \Sigma^{*}$, $x \in L$ if and only if $x \in U$ and there exists $s \in \SUB{w(\len{x}) + w'(\len{x})}$ such that $s(x) \in V$.  Consider a new language
\[
  L' := \set{x \in \Clo{U}}{(\exists s \in \SUB{w(\len{x})}) \; s(x) \in V}.
\]
 %
By interpreting $V$ as a verification language for $L'$ with input string universe $\Clo{U}$, we get $L' \in \dtiwi(t(n), w(n))$.  By assumption, it follows that
\[
  L' \in \dtime(t'(n)).
\]
Finally, by interpreting $L'$ as a verification language for $L$ with input string universe $U$, we get $L \in \dtiwi(t'(n), w'(n))$.
\end{proof}

Recall that a function $f \colon \N \to \N$ is \emph{fully time-constructible} if there is a deterministic multitape Turing machine $M$ that for every input of length $n$ runs for exactly $f(n)$ steps~\cite{Homer2011:computability}.
By convention, if $f(n)$ is a fully time-constructible function, then $f(n) \ge n$ for all $n \in \N$.

\begin{lemma}[Padding Lemma]\label{lemma:padding}
If $\dtiwi(t(n),w(n)) \subseteq \dtime(t'(n))$, then for all fully time-constructible functions $f$,
\[
  \dtiwi(t(f(n)),w(f(n))) \subseteq \dtime(t'(f(n))).
\]
\end{lemma}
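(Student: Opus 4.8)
The key idea is a padding argument: given a language $L \in \dtiwi(t(f(n)), w(f(n)))$, I want to "stretch" each input $x$ of length $n$ to an input of length roughly $f(n)$ so that the hypothesis $\dtiwi(t(n), w(n)) \subseteq \dtime(t'(n))$ applies at the padded length, and then translate the resulting $O(t'(f(n)))$-time decision procedure back down to the original inputs. The main subtlety, and the reason the $\dtiwi$ model was carefully set up with placeholder characters preserving input size, is that padding must be done in a way that (i) does not introduce new placeholders that could be filled, (ii) keeps the input-string universe $U$ and its closure $\Clo{U}$ decidable in linear time, and (iii) is invertible in linear time so the final algorithm on unpadded inputs is efficient.

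First I would fix $L \in \dtiwi(t(f(n)), w(f(n)))$ with input string universe $U$ and verification language $V \in \dtime(O(t(f(n))))$, so that $x \in L$ iff $x \in U$ and some $s \in \SUB{w(f(\len{x}))}$ has $s(x) \in V$. I would define a padding map $\pi$ that sends a string $x$ of length $n$ to $x$ followed by a marker and then $f(n) - n - 1$ copies of a fixed non-placeholder symbol (say $0$), so that $\len{\pi(x)} \approx f(n)$ and $\pcount{\pi(x)} = \pcount{x}$; since $f$ is efficiently computable and (one may assume) $f(n) \geq n$, this map is computable in essentially the time to write the output, and it is injective with an essentially-linear-time inverse that simply strips the padding. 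Then I would build the padded language $\hat{L} := \{\pi(x) : x \in L\}$ together with $\hat{U} := \{\pi(x) : x \in U\}$ as its input-string universe and $\hat{V} := \{\pi(x) : x \in \Sigma^{*}, \text{ the underlying } x \text{ has } s(x) \in V \text{ for the relevant } s\}$ — more precisely, $\hat{V}$ accepts a string $z$ iff $z$ is in the image of $\pi$ (a linear-time check) and stripping the padding from $z$ yields a string in $V$. The crucial bookkeeping is that a prefix filling $s \in \SUB{w(\len{\pi(x)})} = \SUB{w(f(n))}$ acts on $\pi(x)$ exactly as it acts on $x$ (the padding has no placeholders), so $s(\pi(x)) \in \hat{V}$ iff $s(x) \in V$; hence $\hat{L} \in \dtiwi(t(f(n)), w(f(n)))$ \emph{as a function of the padded input length} $N \approx f(n)$, i.e.\ $\hat{L} \in \dtiwi(t(N), w(N))$. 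One also checks $\hat{U} \in \dtime(O(N))$ and $\Clo{\hat{U}} \in \dtime(O(N))$, using that $\Clo{\hat{U}}$ consists of partially-filled padded strings, whose padding can be verified and stripped in linear time, reducing membership to $\Clo{U} \in \dtime(O(n))$.

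Next I would apply the hypothesis $\dtiwi(t(n), w(n)) \subseteq \dtime(t'(n))$ to conclude $\hat{L} \in \dtime(t'(N))$, i.e.\ there is a multitape machine deciding membership of a length-$N$ string in $\hat{L}$ in $O(t'(N))$ steps. Finally, to decide $L$ on an input $x$ of length $n$: compute $\pi(x)$ in $O(f(n))$ time (writing it onto a tape), run the $\hat{L}$-decider on $\pi(x)$, which takes $O(t'(f(n)))$ time since $\len{\pi(x)} = \Theta(f(n))$ and $t'$ is nondecreasing, and accept iff it accepts; correctness is immediate from $x \in L \iff \pi(x) \in \hat{L}$ (using injectivity of $\pi$). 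Assuming $f(n) \geq n$ and $t'(f(n)) = \Omega(f(n))$, the total running time is $O(t'(f(n)))$, giving $L \in \dtime(t'(f(n)))$ and hence $\dtiwi(t(f(n)), w(f(n))) \subseteq \dtime(t'(f(n)))$.

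The main obstacle I anticipate is not conceptual but definitional precision: making the padding map land \emph{exactly} at length $f(n)$ (or handling $f(n) < n$, where one should pad by $\max\{f(n), n\}$ or simply note the statement is only interesting for $f(n) \geq n$), ensuring the marker symbol lets the inverse be computed unambiguously in linear time, and verifying carefully that $\hat{U}$, $\Clo{\hat{U}}$, and $\hat{V}$ all meet the linear-time and linear-time-closure requirements of the $\dtiwi$ definition — in particular that membership in the image of $\pi$ is checkable in linear time, which it is since it only requires locating the marker and confirming the tail is all $0$'s of the correct count. Everything else is routine simulation overhead absorbed into the $O(\cdot)$.
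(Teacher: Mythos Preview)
Your proposal is correct and follows essentially the same padding argument as the paper: pad each input $x$ to length $f(\len{x})$ with non-placeholder symbols (the paper prepends $1^{k-1}\cdot 0$ rather than appending, but this is cosmetic), show the padded language lies in $\dtiwi(t(N),w(N))$ at the new length $N$, apply the hypothesis, and unpad. The bookkeeping you flag---invertibility of the padding via a marker, verifying $\hat U$, $\Clo{\hat U}$, $\hat V$ are linear-time, and that prefix fillings commute with the padding because the added symbols contain no placeholders---is exactly what the paper handles (somewhat more tersely) in its proof.
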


\begin{proof}
Suppose that $\dtiwi(t(n),w(n)) \subseteq \dtime(t'(n))$,
and that $f \colon \N \to \N$ is fully time-constructible.
By definition, there exist an input string universe $U$ and a verification language
\[
  V \in \dtime(t(f(n)))
\]
so that $\forall x \in \Sigma^{*}$, $x \in L$ if and only if $x \in U$ and there exists
\[
  s \in \SUB{w(f(\len{x}))}
\]
such that $s(x) \in V$.
Consider new languages $L'$, $V'$, and $U'$ such that
\begin{align*}
L' &:= \set{1^{k-1} \cdot 0 \cdot x}{k + \len{x} = f(\len{x}) \; \wedge \; x \in L}, \\
V' &:= \set{1^{k-1} \cdot 0 \cdot x}{k + \len{x} = f(\len{x}) \; \wedge \; x \in V}, \text{ and}\\
U' &:= \set{1^{k-1} \cdot 0 \cdot x}{k \geq 1 \; \wedge \; x \in U}.
\end{align*}
Since $V \in \dtime(t(f(n)))$,
we have that $V' \in \dtime(t(n))$.  By interpreting $V'$ as a verification language for $L'$ with input string universe $U'$, we get $L' \in \dtiwi(t(n), w(n))$.  By assumption, it follows that $L' \in \dtime(t'(n))$.
We conclude that $L \in \dtime(t'(f(n)))$.
\end{proof}

\begin{remark}\label{remark:dtiwi-explanation}
Initially, we tried to use other notions of limited nondeterminism such as $\ntigu$ to prove the preceding lemmas.
However, the proofs were messy and required increasing the number of Turing machine tapes or the time complexity.
In contrast, our model for limited nondeterminism ($\dtiwi$) preserves the input size leading to straightforward proofs with tighter complexity bounds.
\end{remark}

\subsection{Speed-up Theorems}\label{subsection:speedup}
In this subsection, we carefully prove three speed-up theorems relating time and witness length.  It is important to mention that there are existing speed-up theorems in the recent literature relating different computational resources such as those relating time and space in \mbox{\cite{Williams:2013,Buss:2014}} and relating probabilistic circuit size and success probability in \cite{Paturi2010:complexity}.  In addition, although relevant, we note that our speed-up results are distinct from recent hardness magnification results \cite{Williams:2020} which amplify circuit lower bounds rather than speed up computations.


The first speed-up theorem follows by repeatedly applying the structural properties
of limited nondeterminism from the preceding subsection.

\begin{theorem}[First Speed-up Theorem]\label{theorem:iterative1}
Let $\alpha$ be a rational number such that \mbox{$1 \leq \alpha < 2$.
If}
\[
  \dtiwi(n,\log(n)) \subseteq \dtime(n^{\alpha}),
\]
then for all $k \in \mathbb{N}$, $\dtiwi(n,(\gsum{0}{k}{\alpha})\log(n)) \subseteq \dtime(n^{\alpha^{k + 1}})$.
\end{theorem}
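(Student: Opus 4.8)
The plan is to induct on $k$, using the two structural lemmas to trade witness length for time at each step. The base case $k = 0$ reads $\dtiwi(n, (\alpha^0)\log(n)) = \dtiwi(n, \log(n)) \subseteq \dtime(n^{\alpha^1})$, which is exactly the hypothesis. For the inductive step, suppose $\dtiwi(n, (\gsum{0}{k}{\alpha})\log(n)) \subseteq \dtime(n^{\alpha^{k+1}})$; I want to conclude the analogous statement for $k+1$.

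First I would apply the Translation Lemma (Lemma~\ref{lemma:witness}) with $t(n) = n$, $w(n) = (\gsum{0}{k}{\alpha})\log(n)$, $t'(n) = n^{\alpha^{k+1}}$, and the additional witness budget $w'(n) = \alpha^{k+1}\log(n)$. This yields
$$\dtiwi\bigl(n, (\gsum{0}{k}{\alpha})\log(n) + \alpha^{k+1}\log(n)\bigr) \subseteq \dtiwi\bigl(n^{\alpha^{k+1}}, \alpha^{k+1}\log(n)\bigr),$$
and the left-hand witness bound is exactly $(\gsum{0}{k+1}{\alpha})\log(n)$. So it remains to show $\dtiwi(n^{\alpha^{k+1}}, \alpha^{k+1}\log(n)) \subseteq \dtime(n^{\alpha^{k+2}})$. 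Here I would use the Padding Lemma (Lemma~\ref{lemma:padding}): starting again from the inductive hypothesis $\dtiwi(n, (\gsum{0}{k}{\alpha})\log(n)) \subseteq \dtime(n^{\alpha^{k+1}})$ and substituting $f(n) = n^{\alpha^{k+1}}$ gives
$$\dtiwi\bigl(n^{\alpha^{k+1}}, (\gsum{0}{k}{\alpha})\log(n^{\alpha^{k+1}})\bigr) \subseteq \dtime\bigl((n^{\alpha^{k+1}})^{\alpha^{k+1}}\bigr) = \dtime(n^{\alpha^{2(k+1)}}).$$
Since $\log(n^{\alpha^{k+1}}) = \alpha^{k+1}\log(n)$, the witness bound on the left is $\alpha^{k+1}(\gsum{0}{k}{\alpha})\log(n) = (\gsum{1}{k+1}{\alpha})\log(n) \geq \alpha^{k+1}\log(n)$, so the class $\dtiwi(n^{\alpha^{k+1}}, \alpha^{k+1}\log(n))$ is contained in it (fewer placeholders to fill). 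Thus $\dtiwi(n^{\alpha^{k+1}}, \alpha^{k+1}\log(n)) \subseteq \dtime(n^{2\alpha^{k+1}})$, and since $2 \leq \alpha^{k+2}/\alpha^{k+1} \cdot \alpha$... actually $2\alpha^{k+1} \leq \alpha^{k+2}$ fails for $\alpha < 2$; instead I should route the second step through the Translation Lemma's conclusion more carefully — apply Padding with a smaller exponent so the time bound lands at exactly $n^{\alpha^{k+2}}$.

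The cleaner route, which I would actually follow: after the Translation step we need $\dtiwi(n^{\alpha^{k+1}}, \alpha^{k+1}\log(n)) \subseteq \dtime(n^{\alpha^{k+2}})$. Apply the Padding Lemma to the \emph{original hypothesis} $\dtiwi(n, \log(n)) \subseteq \dtime(n^\alpha)$ with $f(n) = n^{\alpha^{k+1}}$: this gives $\dtiwi(n^{\alpha^{k+1}}, \alpha^{k+1}\log(n)) \subseteq \dtime(n^{\alpha^{k+2}})$ directly, since $t(f(n)) = n^{\alpha^{k+1}}$, $w(f(n)) = \alpha^{k+1}\log(n)$, and $t'(f(n)) = (n^{\alpha^{k+1}})^\alpha = n^{\alpha^{k+2}}$. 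Chaining the two containments completes the induction.

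The main obstacle I anticipate is purely bookkeeping: making sure the witness-length arithmetic $(\gsum{0}{k}{\alpha}) + \alpha^{k+1} = \gsum{0}{k+1}{\alpha}$ and the exponent arithmetic $\alpha^{k+1} \cdot \alpha = \alpha^{k+2}$ line up so that the output of Translation feeds exactly into the input of Padding, and vice versa. There is also the subtlety that both lemmas require the auxiliary functions ($w'$ and $f$) to be efficiently computable — $\alpha^{k+1}\log(n)$ and $n^{\alpha^{k+1}}$ are, for fixed $k$ and rational $\alpha$, so this is a nonissue but worth a remark. Finally I should confirm monotonicity of $\dtiwi$ in the witness parameter (a smaller witness bound gives a subclass) to justify the containment into the padded class; this follows immediately from the definition since $\SUB{w} \subseteq \SUB{w'}$ when $w \leq w'$.
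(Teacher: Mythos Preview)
Your proposal is correct and matches the paper's proof exactly: induction on $k$, with the inductive step carried out by first applying the Translation Lemma (to the inductive hypothesis, with $w'(n)=\alpha^{k+1}\log n$) and then the Padding Lemma (to the \emph{original} hypothesis $\dtiwi(n,\log n)\subseteq\dtime(n^\alpha)$, with $f(n)=n^{\alpha^{k+1}}$). Your initial detour---padding the inductive hypothesis instead and landing at $n^{\alpha^{2(k+1)}}$---is indeed the wrong move, and your self-correction to use the base hypothesis is precisely what the paper does; so you may simply excise that false start in the write-up.
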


\begin{proof}

Suppose $\alpha$ is rational and $1 \leq \alpha < 2$.
(Note that when $\alpha = 1$ some of the following formulas can be simplified, but the proof still holds for this case.)

Now suppose that $\dtiwi(n,\log(n)) \subseteq \dtime(n^{\alpha})$.  We prove by induction on $k$ that for all $k \in \mathbb{N}$,
\[
  \dtiwi(n,(\gsum{0}{k}{\alpha})\log(n)) \subseteq \dtime(n^{\alpha^{k + 1}}).
\]
The base case ($k = 0$) is true by assumption.  For the induction step, \mbox{suppose that}
\[
  \dtiwi(n,(\gsum{0}{k}{\alpha})\log(n)) \subseteq \dtime(n^{\alpha^{k + 1}}).
\]
By applying this assumption with Lemma~\ref{lemma:witness}, we get that
\[
  \dtiwi(n,(\gsum{0}{k+1}{\alpha})\log(n)) \subseteq \dtiwi(n^{\alpha^{k + 1}}, \alpha^{k+1} \cdot \log(n)).
\]
Let $f(n) = n^{\alpha^{k+1}}$, which is fully time-constructible~\cite[Example 1]{Kobayashi1985:proving}.
Next, we apply our initial assumption and Lemma~\ref{lemma:padding} with $f(n)$, $w(n) = \log(n)$, $t(n) = n$, and $t'(n) = n^{\alpha}$.
Therefore,
\[
  \dtiwi(n^{\alpha^{k + 1}}, \alpha^{k+1} \cdot \log(n)) \subseteq \dtime(n^{\alpha^{k + 2}}).
\]
It follows that $\dtiwi(n,(\gsum{0}{k+1}{\alpha})\log(n)) \subseteq \dtime(n^{\alpha^{k + 2}}).$
\end{proof}



\begin{remark}
Theorem~\ref{theorem:iterative1} is a speed-up result because when $1 \leq \alpha < 2$, the exponent from the runtime divided by the constant factor for the witness string length decreases as $k$ increases.  In particular, we have
\[
  \lim_{k\to\infty}\frac{\alpha^{k+1}}{\gsum{0}{k}{\alpha}} = (\alpha - 1) \cdot \lim_{k\to\infty}\frac{\alpha^{k+1}}{\alpha^{k+1} - 1} = \alpha - 1 < 1.
\]
\end{remark}


The second speed-up theorem follows by combining the first speed-up theorem with the padding lemma.
We say that a function $g \colon \N \to \N$ is \emph{well-computable} if $g(n) \le n$ for every $n \in \N$, $g(n) = \omega(\log(n))$, and $g(n)$ can be computed in $\poly(n)$ steps.

\begin{theorem}[Second Speed-up Theorem]\label{theorem:speedup}
Suppose that $g$ is a well-computable function.
Let $\alpha$ be a rational number such that $1 < \alpha < 2$.
If
\[
  \dtiwi(n,\log(n)) \subseteq \dtime(n^{\alpha}),
\]
then
\[
  (\forall \eps > 0) \; \dtiwi(\poly(n), g(n)) \subseteq \dtime(2^{(1 + \eps) \cdot (\alpha - 1) \cdot g(n)}).
\]
\end{theorem}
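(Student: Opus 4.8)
The plan is to push the conclusion of the First Speed-up Theorem (Theorem~\ref{theorem:iterative1}) through a single application of the Padding Lemma (Lemma~\ref{lemma:padding}), choosing the padding function so that the witness-length bound lands at $g(n)$ and the time exponent is driven down to $(1+\eps)(\alpha-1)\,g(n)$; the slack $\eps$ is paid for by taking the speed-up parameter $k$ large. Concretely, fix $\eps>0$, write $c_k:=\gsum{0}{k}{\alpha}=(\alpha^{k+1}-1)/(\alpha-1)$, and observe that since $\alpha>1$,
\[
\frac{\alpha^{k+1}}{c_k}=(\alpha-1)\cdot\frac{\alpha^{k+1}}{\alpha^{k+1}-1}\longrightarrow\alpha-1\qquad(k\to\infty),
\]
so we may fix $k=k(\eps)$ with $\alpha^{k+1}/c_k\le(1+\eps)(\alpha-1)$. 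By Theorem~\ref{theorem:iterative1}, the hypothesis $\dtiwi(n,\log(n))\subseteq\dtime(n^{\alpha})$ yields $\dtiwi(n,c_k\log(n))\subseteq\dtime(n^{\alpha^{k+1}})$.

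Next I would apply Lemma~\ref{lemma:padding} to this inclusion with $f(n):=2^{\lceil g(n)/c_k\rceil}$. The hypotheses on $g$ are precisely what makes $f$ a legal padding function: $f$ is integer-valued, efficiently computable, and non-decreasing because $g$ is; the bound $g(n)\le n$ keeps $f$ singly exponential, so that prepending a pad of length $f(n)$ stays within the time budget; and $g(n)=\omega(\log(n))$ forces $f(n)\ge n$ for all large $n$. The lemma then gives $\dtiwi(f(n),c_k\log(f(n)))\subseteq\dtime(f(n)^{\alpha^{k+1}})$. Here $c_k\log(f(n))=c_k\lceil g(n)/c_k\rceil\in[\,g(n),\,g(n)+c_k)$, and
\[
f(n)^{\alpha^{k+1}}=2^{\alpha^{k+1}\lceil g(n)/c_k\rceil}\le 2^{\alpha^{k+1}}\cdot 2^{(\alpha^{k+1}/c_k)\,g(n)},
\]
whose constant factor is absorbed by $\dtime(\cdot)$ and whose exponent is at most $(1+\eps)(\alpha-1)\,g(n)$ by the choice of $k$. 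Hence $\dtiwi(f(n),c_k\log(f(n)))\subseteq\dtime\!\left(2^{(1+\eps)(\alpha-1)g(n)}\right)$.

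It remains to establish $\dtiwi(\poly(n),g(n))\subseteq\dtiwi(f(n),c_k\log(f(n)))$, and this is the step I expect to require the most care. The time side is immediate: every polynomial $n^c$ is $O(f(n))=O(2^{\lceil g(n)/c_k\rceil})$ precisely because $g(n)=\omega(\log(n))$, so a verification language deciding in $O(n^c)$ also decides in $O(f(n))$. The witness side is the genuine obstacle: a language presented with witness bound $g(n)$ must be re-presented with the larger bound $c_k\log(f(n))\ge g(n)$, and one cannot merely appeal to $\SUB{g(\len{x})}\subseteq\SUB{c_k\log(f(\len{x}))}$, since admitting more prefix fillings could cause the verifier to accept on strings outside the language (the class $\dtiwi(t(n),w(n))$ is not monotone in $w$ in general). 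The clean route is to work with the assumption that every input $x$ in the universe satisfies $\pcount{x}\le g(\len{x})$, the natural situation which holds for the encodings used in our applications (cf.\ Example~\ref{example:sat}); when it does not hold it can be arranged by marking the placeholder occurrences of $x$ beyond the first $g(\len{x})$ with an inert symbol that the verifier treats exactly like the placeholder character, a linear-time normalisation that changes neither $\len{x}$ nor membership and leaves $U$, $\Clo{U}$, and the verifier within their time bounds. Under this assumption no prefix filling can touch more than $\pcount{x}\le g(\len{x})\le c_k\log(f(\len{x}))$ placeholders, so the witness bounds $g(n)$ and $c_k\log(f(n))$ describe exactly the same language, and the inclusion holds with the same universe and verifier. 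Chaining the three inclusions then yields $\dtiwi(\poly(n),g(n))\subseteq\dtime\!\left(2^{(1+\eps)(\alpha-1)g(n)}\right)$, and since $\eps>0$ was arbitrary, this gives the theorem.
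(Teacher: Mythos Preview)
Your approach is the same as the paper's: invoke the First Speed-up Theorem, then apply the Padding Lemma with $f(n)\approx 2^{g(n)/c_k}$, and finally use $g(n)=\omega(\log n)$ to subsume any $\poly(n)$ time bound into $f(n)$. The only difference is that the paper takes $f(n)=2^{g(n)/c_k}$ without a ceiling, so that $c_k\log f(n)=g(n)$ on the nose and the witness-monotonicity issue you flag never arises; your $\lceil\cdot\rceil$ is more careful about integer pad lengths but then obliges you to argue that raising the witness bound from $g(n)$ to $c_k\lceil g(n)/c_k\rceil$ is harmless, which you handle adequately.
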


\begin{proof}
Let $\alpha$ be a rational number such that $1 < \alpha < 2$.
Let $z(\alpha,k) = \gsum{0}{k}{\alpha}$.
Note that
\[
  z(\alpha,k) = \frac{\alpha^{k+1} - 1}{\alpha - 1}.
\]

Suppose that $\dtiwi(n,\log(n)) \subseteq \dtime(n^{\alpha})$.  Let $\eps > 0$ be given.  By Theorem~\ref{theorem:iterative1}, we have that for all $k \in \mathbb{N}$,
\[
  \dtiwi(n,z(\alpha,k)\log(n)) \subseteq \dtime(n^{\alpha^{k + 1}}).
\]

Let $f(n) = 2^{\lceil g(n)/z(\alpha,k) \rceil}$ if $g(n) \ge z(\alpha,k)\log(n)$ and $f(n) = n$ otherwise.
Because $g(n) = \omega(\log(n))$, there is some $c > 1$ such that $f(n) > cn$ for all but finitely many $n \in \N$.
Now, since $g(n)$ can be computed in $\poly(n)$ time and $z(\alpha,k)$ is rational, $f(n)$ can be computed in binary in $\poly(n)$ time.  Furthermore, since $f(n)$ is superpolynomial, $f(n)$ can be computed in $O(f(n))$ time.
Therefore, by \cite[Theorem 4.1]{Kobayashi1985:proving}, $f(n)$ is fully time-constructible.
Next, we apply Lemma~\ref{lemma:padding} with $f(n)$, \mbox{$w(n) = z(\alpha,k)\log(n)$}, and $t(n) = n$.  Therefore
\[
  \dtiwi(2^{g(n)/z(\alpha,k)},g(n)) \subseteq \dtime(2^{(\alpha^{k+1}) \cdot g(n)/z(\alpha,k)}).
\]
Again, since $2^{g(n)/z(\alpha,k)}$ is superpolynomial, we have
\[
  \dtiwi(\poly(n),g(n)) \subseteq \dtime(2^{(\alpha^{k+1}) \cdot g(n)/z(\alpha,k)}).
\]
Then, since
\[
  \lim_{k\to\infty} \frac{\alpha^{k + 1}}{\alpha^{k+1} - 1} = 1,
\]
there exists $k$ sufficiently large such that
\[
  \frac{\alpha^{k+1}}{\alpha^{k+1} - 1} \leq 1 + \eps.
\]
Therefore, by choosing sufficiently large $k$, we have
\[
  \dtiwi(\poly(n),g(n)) \subseteq \dtime(2^{(1 + \eps) \cdot (\alpha - 1) \cdot g(n)}).\qedhere
\]
\end{proof}

\begin{corollary}\label{corollary:speedup}
Suppose that $g$ is a well-computable function.
If for all $\alpha > 1$,
\[
  \dtiwi(n, \log(n)) \subseteq \dtime(n^{\alpha}),
\]
then
$(\forall \eps > 0) \; \dtiwi(\poly(n),g(n)) \subseteq \dtime(2^{\eps \cdot g(n)}).$
\end{corollary}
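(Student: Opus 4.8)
The plan is to obtain the corollary as an immediate consequence of the Second Speed-up Theorem by instantiating its fixed exponent $\alpha$ close enough to $1$ and its internal slack parameter appropriately. Fix an arbitrary $\eps > 0$; the goal is $\dtiwi(\poly(n), g(n)) \subseteq \dtime(2^{\eps \cdot g(n)})$.

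First I would choose a constant $\alpha$ with $1 < \alpha < \min\{2,\, 1 + \eps/2\}$, so that in particular $1 < \alpha < 2$ and $2(\alpha - 1) < \eps$. The hypothesis of the corollary supplies $\dtiwi(n, \log(n)) \subseteq \dtime(n^{\alpha})$ for \emph{every} $\alpha > 1$, hence for this particular $\alpha$; and the assumptions on $g$ (efficiently computable, non-decreasing, $g(n) \leq n$, $g(n)$ is $\omega(\log(n))$) are exactly those required by Theorem~\ref{theorem:speedup}. So the hypotheses of the Second Speed-up Theorem are met.

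Then I would apply Theorem~\ref{theorem:speedup} with this $\alpha$, reading its conclusion ``$(\forall \eps' > 0)\ \dtiwi(\poly(n), g(n)) \subseteq \dtime(2^{(1 + \eps')(\alpha - 1) g(n)})$'' at the single value $\eps' = 1$. This yields $\dtiwi(\poly(n), g(n)) \subseteq \dtime(2^{2(\alpha - 1) g(n)})$. Since $g$ is nonnegative and $2(\alpha - 1) < \eps$, we have $2^{2(\alpha - 1) g(n)} \leq 2^{\eps g(n)}$ for all $n$, so $\dtime(2^{2(\alpha-1) g(n)}) \subseteq \dtime(2^{\eps g(n)})$ and therefore $\dtiwi(\poly(n), g(n)) \subseteq \dtime(2^{\eps g(n)})$. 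As $\eps$ was arbitrary, the corollary follows.

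There is no real obstacle here: the argument is a quantifier manipulation plus the elementary choice of $\alpha$ making $2(\alpha - 1) < \eps$ while keeping $\alpha \in (1,2)$. All substantive content lives in Theorems~\ref{theorem:iterative1} and~\ref{theorem:speedup}; the only things worth double-checking are that the chosen $\alpha$ genuinely lies in the interval $(1,2)$ demanded by the Second Speed-up Theorem, and that $g(n) \geq 0$ justifies the final monotone comparison of the two exponential bounds.
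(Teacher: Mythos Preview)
Your argument is correct and is precisely the intended derivation: the paper's own proof simply states that the corollary ``follows directly from Theorem~\ref{theorem:speedup}'', and your choice of $\alpha\in(1,2)$ with $2(\alpha-1)<\eps$ together with $\eps'=1$ makes that implication explicit.
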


\begin{proof}
Follows directly from Theorem~\ref{theorem:speedup}.
\end{proof}

The third speed-up theorem follows by carefully applying the first speed-up theorem.

\begin{theorem}[Third Speed-up Theorem]\label{theorem:parameterized-speedup}
If for all $\alpha > 1$,
\[
  \dtiwi(n, \log(n)) \subseteq \dtime(n^{\alpha}),
\]
then for all $k \in \N$ and all $\alpha > 1$,
\(
  \dtiwi(n, k \cdot \log(n)) \subseteq \dtime(n^{\alpha}).
\)
\end{theorem}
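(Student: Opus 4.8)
The plan is to reduce to the First Speed-up Theorem (Theorem~\ref{theorem:iterative1}) by running it with an auxiliary exponent $\beta$ chosen only slightly above $1$, iterated just enough times that the geometric factor $\gsum{0}{j}{\beta}$ overtakes the target constant $k$, while keeping the resulting exponent $\beta^{j+1}$ below the prescribed $\alpha$. Concretely, fix $k$ and a target $\alpha>1$. First I would pick $\delta>0$ small enough that $\delta<1$ and $(k+1)\delta+k\delta^{2}\le\alpha-1$ (for instance $\delta=\min\{1/2,(\alpha-1)/(2(k+1))\}$ works, since then $(k+1)\delta+k\delta^{2}\le 2(k+1)\delta\le\alpha-1$), and set $\beta:=1+\delta$, so that $1<\beta<2$. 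Applying the hypothesis with exponent $\beta$ gives $\dtiwi(n,\log(n))\subseteq\dtime(n^{\beta})$, which is exactly the premise of Theorem~\ref{theorem:iterative1} for the base $\beta$.

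Next I would let $j$ be the least natural number with $\gsum{0}{j}{\beta}\ge k$; such a $j$ exists because $\gsum{0}{j}{\beta}=(\beta^{j+1}-1)/(\beta-1)\to\infty$ as $j\to\infty$ (here $\beta>1$). Theorem~\ref{theorem:iterative1}, applied with $\alpha$ replaced by $\beta$ and $k$ replaced by $j$, then yields
$$\dtiwi\bigl(n,(\gsum{0}{j}{\beta})\log(n)\bigr)\subseteq\dtime\bigl(n^{\beta^{j+1}}\bigr).$$
Since $k\le\gsum{0}{j}{\beta}$, monotonicity of $\dtiwi$ in its witness-length argument (a larger witness bound only enlarges the family $\SUB{\cdot}$ of admissible prefix fillings, so $\dtiwi(t,w)\subseteq\dtiwi(t,w')$ whenever $w\le w'$) gives $\dtiwi(n,k\log(n))\subseteq\dtiwi(n,(\gsum{0}{j}{\beta})\log(n))$.

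It remains to check $\beta^{j+1}\le\alpha$, which closes the chain $\dtiwi(n,k\log(n))\subseteq\dtime(n^{\beta^{j+1}})\subseteq\dtime(n^{\alpha})$. If $j=0$ this is immediate, since then $\beta^{j+1}=1+\delta\le\alpha$ by the choice of $\delta$. If $j\ge1$, then minimality of $j$ gives $\gsum{0}{j-1}{\beta}<k$, that is $(\beta^{j}-1)/(\beta-1)<k$, hence $\beta^{j}<1+k\delta$, and therefore
$$\beta^{j+1}=\beta\cdot\beta^{j}<(1+\delta)(1+k\delta)=1+(k+1)\delta+k\delta^{2}\le\alpha,$$
again by the choice of $\delta$. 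This estimate is the crux of the argument: once $\beta$ and $k$ are fixed, $j$ is forced on us, and the tension is that $j$ must be large enough for $\gsum{0}{j}{\beta}$ to reach $k$ yet small enough that $\beta^{j+1}$ does not overshoot $\alpha$; the minimality bound $\beta^{j}<1+k\delta$ is precisely what keeps $\beta^{j+1}$ within $O(\delta)$ of $1$, so choosing $\delta$ small suffices. The only other ingredient is the witness-length monotonicity of $\dtiwi$; everything else is a direct combination of parameter choices with Theorem~\ref{theorem:iterative1}.
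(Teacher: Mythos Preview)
Your proof is correct and follows essentially the same approach as the paper: invoke the First Speed-up Theorem with an auxiliary exponent $\beta$ close to $1$, then use monotonicity of $\dtiwi$ in the witness bound to pass from $(\gsum{0}{j}{\beta})\log(n)$ down to $k\log(n)$. The paper's execution is slightly more direct: rather than choosing $j$ minimal and bounding $\beta^{j+1}$ via the minimality inequality, it simply takes $j=k$ (since $\gsum{0}{k}{\beta}>k$ whenever $\beta>1$, the sum having $k+1$ terms each at least $1$) and sets $\beta=\alpha^{1/(k+1)}$ so that $\beta^{k+1}=\alpha$ exactly.
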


\begin{proof}
Suppose that for all $\alpha > 1$, $\dtiwi(n, \log(n)) \subseteq \dtime(n^{\alpha})$.  By Theorem~\ref{theorem:iterative1}, for all rational $\alpha$ such that $1 < \alpha < 2$ and for all $k \in \mathbb{N}$,
\[
  \dtiwi(n,(\gsum{0}{k}{\alpha})\log(n)) \subseteq \dtime(n^{\alpha^{k + 1}}).
\]
Notice that when $\alpha > 1$, we have $k < (\gsum{0}{k}{\alpha})$.  Therefore, for all rational $\alpha$ such that $1 < \alpha < 2$ and for all $k \in \mathbb{N}$,
\[
  \dtiwi(n, k \cdot \log(n)) \subseteq \dtime(n^{\alpha^{k + 1}}).
\]

Now, let $k \in \mathbb{N}$ and a rational number $\alpha_1 > 1$ be given.  We can choose a rational number $\alpha_2 > 1$ sufficiently close to $1$ so that $\alpha_2^{k + 1} \leq \alpha_1$.  It follows that
\[
  \dtiwi(n, k \cdot \log(n)) \subseteq \dtime(n^{\alpha_2^{k + 1}}) \subseteq \dtime(n^{\alpha_1}).
\]
As the rationals form a dense subset of the reals, the result follows.
\end{proof}

%

\section{Superlinear Conditional Lower Bounds}
\label{section:stronger}
\subsection{log-CircuitSAT Decision Problem}

A common generalization of $\sat$ is the problem of deciding satisfiability of Boolean circuits (denoted by $\circsat$).  There is a natural restriction of $\circsat$ to bounded fan-in Boolean circuits with $m$ gates and $\log(m)$ inputs (denoted by $\logcircsat$) \cite{Buss:1993,Abrahamson:1995}.
We encode this problem so that the placeholder characters are out front followed by an encoding of a bounded fan-in Boolean circuit.  Such an encoding can be carried out so that if $n$ denotes the total input length and $m$ denotes the number of gates, then $n = \Theta(m \cdot \log(m))$.

The $\logcircsat$ decision problem is decidable in polynomial time because we can evaluate the circuit on every possible input assignment.  Whether or not we can decide $\logcircsat$ in $O(n^{2-\eps})$ time for some $\eps > 0$ is an open problem.  Furthermore, as far as we know, no unconditional superlinear lower bounds are known for $\logcircsat$.  Later in this section, we prove a superlinear conditional lower bound for $\logcircsat$.  In particular, we show that if $\logcircsat$ is decidable in essentially-linear time, then $\ethypo$ is false (Theorem~\ref{theorem:cond-lower-bound2}) meaning that a small runtime improvement for $\logcircsat$ would lead to a substantial runtime improvement for $\NPTIME$-complete problems.

\subsection{Simulating Turing Machines Using Boolean Circuits}\label{subsection:simulating-machines-using-circuits}
Let a fully time-constructible function $t$ be given.  Any $O(t(n))$ time bounded Turing machine can be simulated by an oblivious Turing machine in $O(t(n) \cdot \log(t(n)))$ time \cite{Pippenger:1979}.  Moreover, any $O(t(n))$ time bounded Turing machine can be simulated by Boolean circuits of size \mbox{$O(t(n) \cdot \log(t(n)))$} which can be computed efficiently by a Turing machine \cite{Cook1988:short}.

\begin{theorem}[\cite{Pippenger:1979,Cook1988:short,Buss:1993,Lipton:2012}]\label{theorem:circuits}
Let a fully time-constructible function $t$ be given.  If $L \in \dtime(t(n))$, then in
\[
  O(t(n) \cdot \poly(\log(t(n))))
\]
time, we can compute Boolean circuits for $L$ of size at most $O(t(n) \cdot \log(t(n)))$.
\end{theorem}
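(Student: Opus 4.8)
The plan is to combine an oblivious simulation of multitape Turing machines with the standard step-by-step tableau-to-circuit construction, being careful that each layer costs only a constant factor in circuit size and at most a polylogarithmic factor in construction time. Throughout, I may assume (as is standard in such statements) that $t$ is time-constructible, so that for each input length $n$ we know a bound on the running time and hence the size of the circuit to emit.

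First I would invoke the Pippenger--Fischer oblivious simulation \cite{Pippenger:1979}: given an $O(t(n))$ time multitape machine for $L$ (with read/write access to all tapes, the input tape simply being a work tape initialized to $x$), there is an equivalent $2$-tape oblivious Turing machine running in time $t'(n) = O(t(n) \cdot \log(t(n)))$. Obliviousness means that for each length $n$ the sequence of head positions $h_1, h_2, \ldots, h_{t'(n)}$ depends only on $n$, not on the input string; moreover, for this construction the head-position sequence is itself easy to generate, in $O(t'(n) \cdot \polylog(t'(n)))$ time. This uniformity is what ultimately lets us meet the claimed construction-time bound.

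Next I would turn the oblivious machine into a circuit $C_n$ built as a chain of $t'(n)$ constant-size step gadgets, following the encoding of computation steps by short formulas/circuits from \cite{Cook1988:short}. Gadget $i$ takes as inputs the $O(1)$ bits encoding the machine state just before step $i$ together with the $O(1)$ tape symbols currently under the heads, and outputs the new state encoding and the symbols to be written. Because the machine is oblivious, the cells touched at step $i$ are known in advance, so the value stored in a given cell is carried from the gadget of its previous access directly to gadget $i$ by a wire that passes through no gates; the first access to an input cell $j \le n$ is wired to the input variable $x_j$, and work cells start from the constant blank symbol. The accept/reject bit is read off the state after step $t'(n)$, and all gates have bounded fan-in.

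Finally I would tally the two resource bounds. The circuit has $O(t'(n)) = O(t(n) \cdot \log(t(n)))$ gates, since there are $t'(n)$ gadgets of constant size and the inter-gadget routing introduces no extra gates. For the construction time, the per-step work is computing $h_i$ and emitting the constantly many gate and wire descriptions for gadget $i$; each description names gates and wire endpoints using $O(\log(t'(n)))$ bits, so the whole circuit is written down in $O(t'(n) \cdot \polylog(t'(n))) = O(t(n) \cdot \poly(\log(t(n))))$ time, which absorbs the cost of head-position generation. I expect the main obstacle to be essentially bookkeeping: checking that the oblivious simulation genuinely has efficiently computable head positions (so the circuit family is sufficiently uniform) and that no stage silently inflates the size past $O(t(n) \cdot \log(t(n)))$. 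The two substantive ingredients — the $O(\log t)$ oblivious overhead of \cite{Pippenger:1979} and the local-gadget encoding of single computation steps from \cite{Cook1988:short} — are taken as given.
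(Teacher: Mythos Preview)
Your sketch is correct and follows exactly the standard route that the paper's citations point to: the Pippenger--Fischer oblivious simulation followed by a local gadget-per-step circuit encoding, with the polylog overhead coming from writing down $O(\log t)$-bit gate labels. The paper itself does not give a proof of this theorem at all---it is stated as a known result and attributed to \cite{Pippenger:1979,Cook1988:short,Buss:1993,Lipton:2012}, with only the one-sentence gloss in the paragraph preceding it---so there is nothing further to compare.
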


We now use Theorem~\ref{theorem:circuits} to show that any problem in $\dtiwi(n,\log(n))$ is efficiently reducible to $\logcircsat$.

\begin{theorem}\label{theorem:hardness}
Any $L \in \dtiwi(n,\log(n))$ is reducible to logarithmically many instances of $\logcircsat$ in essentially-linear time by a Turing machine.
\end{theorem}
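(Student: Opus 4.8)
The plan is to turn the $\dtiwi$-certificate of $L$ directly into $\logcircsat$ instances, using one instance for each admissible length of the nondeterministic prefix. Fix $L \in \dtiwi(n,\log(n))$ with an input string universe $U$ and a verification language $V$, so $U, V \in \dtime(O(n))$ and, writing $n = \len{x}$, we have $x \in L$ iff $x \in U$ and $\sub_r(x) \in V$ for some $r$ with $\len{r} \le \lfloor\log n\rfloor$. On input $x$, the reduction first runs the linear-time decider for $U$; if $x \notin U$ it outputs one fixed unsatisfiable circuit (which is consistent, since $x \notin U$ forces $x \notin L$). Otherwise it outputs the circuits $D_0, D_1, \dots, D_m$, where $m := \min\{\lfloor\log n\rfloor, \pcount{x}\}$, constructed below; in all cases the intended correctness statement is that $x \in L$ iff at least one produced instance is satisfiable.

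For the construction, first apply Theorem~\ref{theorem:circuits} to $V$ with $t(n) = n$, encoding each symbol of $\Sigma = \{0,1,p\}$ by two bits; this yields, in $O(n\cdot\poly(\log n))$ time, a Boolean circuit $C_n$ on $2n$ inputs of size $O(n\log n)$ that, on the binary encoding of $z \in \Sigma^{n}$, outputs $1$ iff $z \in V$. Since $\len{\sub_r(x)} = \len{x} = n$, this circuit tests every filling of $x$. Scan $x$ once to record its placeholder positions. For $\ell \in \{0,\dots,m\}$, obtain $D_\ell$ from $C_n$ by hardwiring each input bit to the corresponding bit of the encoding of $x$, except that at the two encoding wires of the $i$-th placeholder position of $x$ (for $1 \le i \le \ell$) we hardwire the two wires so that one marks the symbol as a literal bit while the other is fed by a fresh input variable $r_i$. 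Then $D_\ell$ has exactly $\ell \le \log n$ inputs and, on $r \in \{0,1\}^{\ell}$, outputs $1$ precisely when $\sub_r(x) \in V$, so $D_\ell$ is satisfiable iff some prefix filling of span $\ell$ lies in $V$. Because $\sub_r(x)$ stops changing once $\len{r}$ exceeds $\pcount{x}$, the disjunction over $\ell \in \{0,\dots,m\}$ is exactly the condition $(\exists s \in \SUB{\lfloor\log n\rfloor})\; s(x) \in V$. Hence $x \in L$ iff $x \in U$ and some $D_\ell$ is a yes-instance of $\logcircsat$, a disjunctive reduction producing $m + 1 = O(\log n)$ instances.

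It remains to check that each $D_\ell$ is a valid $\logcircsat$ instance and that the reduction runs in essentially-linear time. Let $M$ denote the number of gates of $D_\ell$; then $M = O(n\log n)$ and $M \ge n$, so $\log M \ge \log n \ge \ell$, and we add $\log M - \ell$ dummy inputs (which feed no gate and so leave $M$ unchanged) to bring the input count to exactly $\log M$; padding $M$ up to a power of two with dummy gates, should an integral $\log M$ be required, changes nothing asymptotically. Computing $C_n$ costs $O(n\cdot\poly(\log n))$ time, and each of the $O(\log n)$ circuits $D_\ell$ is obtained from $C_n$ by $O(M) = O(n\log n)$ substitution-and-padding work, so the whole reduction runs in $O(n\cdot\poly(\log n)) = O(n^{1+\eps})$ time for every $\eps > 0$, which is essentially-linear. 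The step needing the most care is this last accounting: the witness budget $\log n$ coming from $\dtiwi(n,\log(n))$ must fit under the input-count bound $\log M$ that $\logcircsat$ imposes, and it does precisely because simulating even a linear-time verifier inflates the gate count to $\Theta(n\log n) \gg n$; the only other subtlety is that $\SUB{\cdot}$ ties a specific block of placeholders to each prefix length, which is why a single circuit cannot capture all witnesses and we spend one instance per length $\ell$.
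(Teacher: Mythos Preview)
Your proof is correct and follows essentially the same approach as the paper: build the linear-size circuit for the verifier $V$ via Theorem~\ref{theorem:circuits}, then for each admissible witness length $\ell$ hardwire all of $x$ except the first $\ell$ placeholder positions to obtain one $\logcircsat$ instance per $\ell$. You are in fact more careful than the paper on several points---you explicitly handle $x\notin U$, include the empty filling $\ell=0$, cap $\ell$ at $\pcount{x}$, spell out the two-bit encoding of $\Sigma$, and verify that $\ell\le\log M$ so the instances are genuinely $\logcircsat$---but none of this changes the underlying argument.
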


\begin{proof}
Let $L \in \dtiwi(n,\log(n))$ be given.  Let $V \in \dtime(n)$ denote a verification language for $L$ with input string universe $U \in \dtime(n)$.


Let an input string $x \in U$ of length $n$ be given.  By Theorem~\ref{theorem:circuits}, we can compute
a Boolean circuit\footnote{Since $L$ and $V$ are over a ternary alphabet, the input strings are encoded into binary before being fed into the Boolean circuits.} $C$ for $V$ with at most $O(n \cdot \log(n))$ gates in essentially-linear time on a Turing machine.  In the following, let $[\log(n)]$ denote $\{1,2,\dots,\lfloor \log(n) \rfloor\}$.  Now, we construct a family of circuits $\{ C_i \}_{[\log(n)]}$ such that for each $i \in [\log(n)]$, $C_i$ is obtained by fixing the characters of $x$ into the circuit $C$ so that only $i$ input bits remain where these input bits are associated with the first $i$ placeholders within $x$.  Therefore the circuit $C_i$ has at most $\log(n)$ inputs and at most $O(n \cdot \log(n))$ gates.  It follows that $x \in L$ if and only if there exists $i \in [\log(n)]$ such that $C_i$ is satisfiable.
\end{proof}





\begin{corollary}\label{corollary:hardness1}
If for all $\alpha > 1$ we have $\logcircsat \in \dtime(n^{\alpha})$, then for all $\alpha > 1$
\[
  \dtiwi(n, \log(n)) \subseteq \dtime(n^{\alpha}).
\]
\end{corollary}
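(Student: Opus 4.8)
The plan is to compose Theorem~\ref{theorem:hardness} with the hypothesised fast algorithm for $\logcircsat$, and to absorb the polylogarithmic overhead of the reduction into an arbitrarily small polynomial slack by inserting an intermediate exponent. First I would fix an arbitrary $\alpha > 1$ and pick $\beta$ with $1 < \beta < \alpha$. By hypothesis, $\logcircsat \in \dtime(m^{\beta})$, where $m$ is the instance size. Now let $L \in \dtiwi(n,\log(n))$ be given. By Theorem~\ref{theorem:hardness}, there is a Turing machine that, on input $x$ of length $n$, runs in essentially-linear time and outputs $O(\log n)$ instances of $\logcircsat$, each of encoding size $O(n \cdot \polylog(n))$ (a circuit with $O(n \log n)$ gates and $O(\log n)$ inputs), such that $x \in L$ iff at least one of these instances is satisfiable.

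The second step is the time accounting for the composed procedure. Producing the instances costs $O(n^{1+\eps})$ for every $\eps > 0$. Running the assumed $\logcircsat$ algorithm on each of the $O(\log n)$ instances costs $O\big((n \cdot \polylog(n))^{\beta}\big) = O(n^{\beta} \cdot \polylog(n))$ apiece, hence $O(n^{\beta} \cdot \polylog(n))$ in total over all instances; combining the $O(\log n)$ answers by a disjunction is negligible. Since $\beta < \alpha$, both $n^{\beta} \cdot \polylog(n)$ and the essentially-linear cost $O(n^{1+\eps})$ of the reduction (taking $\eps < \alpha - 1$) are $O(n^{\alpha})$. Therefore $L \in \dtime(n^{\alpha})$, and since $L$ and $\alpha > 1$ were arbitrary, the corollary follows.

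I do not expect a genuine obstacle here; the only point requiring care is the bookkeeping. One must check that the reduction of Theorem~\ref{theorem:hardness} runs on the same multitape Turing machine model underlying $\dtime$, that the polylogarithmic factors from the circuit construction of Theorem~\ref{theorem:circuits} and from the $O(\log n)$ copies are genuinely dominated once $\beta$ is chosen \emph{strictly} between $1$ and $\alpha$, and that ``essentially-linear'' is read as ``$O(n^{1+\eps})$ for all $\eps > 0$'' so that it, too, stays below $n^{\alpha}$. All of this is routine given the strict inequality $\beta < \alpha$.
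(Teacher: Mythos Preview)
Your proposal is correct and is essentially the same approach as the paper, which records the corollary as following directly from Theorem~\ref{theorem:hardness}; you have simply made explicit the routine bookkeeping (choosing an intermediate exponent $\beta<\alpha$ to absorb the polylogarithmic overhead) that the paper leaves implicit.
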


\begin{proof}
Follows directly from Theorem~\ref{theorem:hardness}.
\end{proof}

\subsection{ETH-hardness}
We combine results from Subsection~\ref{subsection:simulating-machines-using-circuits} with the Second Speed-up Theorem to prove superlinear conditional lower bounds for $\logcircsat$.  In particular, existence of essentially-linear time algorithms for $\logcircsat$ would imply that $\ethypo$ is false.

\begin{corollary}\label{corollary:hardness2}
Suppose that $g$ is a well-computable function.
If for all $\alpha > 1$,
\[
  \logcircsat \in \dtime(n^{\alpha}),
\]
then
$(\forall \eps > 0) \; \dtiwi(\poly(n),g(n)) \subseteq \dtime(2^{\eps \cdot g(n)}).$
\end{corollary}

\begin{proof}
Follows by combining Corollary~\ref{corollary:hardness1} with Corollary~\ref{corollary:speedup}.
\end{proof}

We now relate $\logcircsat$ and $\circsat$, showing that an essentially-linear upper bound for $\logcircsat$ would imply a subexponential upper bound for $\circsat$.

\begin{theorem}\label{theorem:cond-lower-bound1}
If for every $\alpha > 1$ we have that $\logcircsat \in \dtime(n^{\alpha})$, then
\[
  (\forall \eps > 0) \; \circsat \in \dtime(\poly(n) \cdot 2^{\eps \cdot m}),
\]
where $m$ is the number of gates.
\end{theorem}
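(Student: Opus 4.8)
The plan is to reduce an arbitrary $\circsat$ instance to a $\dtiwi(\poly(n),g(n))$ membership question for a suitable witness-length function $g$, and then invoke Corollary~\ref{corollary:hardness2}. Given a Boolean circuit $D$ with $m$ gates and $v$ inputs, the satisfiability question is: does there exist a bit string $r \in \{0,1\}^{v}$ with $D(r) = 1$? First I would describe an encoding of $\circsat$ instances as strings over $\Sigma = \{0,1,p\}$: place $v$ placeholder characters out front, followed by a standard encoding of $D$; call the input length $n$. Since $v \leq m$ and the circuit encoding has length $\Theta(m \log m)$, we have $n = \Theta(m \log m)$ and in particular $v \leq n$. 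Define $U$ to be the set of syntactically well-formed encodings (checkable, along with $\Clo{U}$, in linear time), and define the verification language $V$ to contain those fully-filled strings for which evaluating the encoded circuit on the filled-in bits yields $1$; circuit evaluation is polynomial-time in $n$, so $V \in \dtime(\poly(n))$. With $g(n)$ chosen so that $g(n) \geq v$ for all instances of input length $n$ (e.g.\ $g(n)$ a slowly growing function with $g(n) \leq n$ and $g(n) = \omega(\log n)$, padding the placeholder block up to length $g(n)$ if necessary so that exactly $g(n)$ placeholders appear), we get that the encoded instance lies in $\dtiwi(\poly(n), g(n))$ iff $D$ is satisfiable. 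Hence $\circsat \le \dtiwi(\poly(n),g(n))$ membership.

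Next I would apply the hypothesis $(\forall \alpha>1)\ \logcircsat \in \dtime(n^\alpha)$ together with Corollary~\ref{corollary:hardness2} to conclude $(\forall \eps>0)\ \dtiwi(\poly(n),g(n)) \subseteq \dtime(2^{\eps \cdot g(n)})$. Composing with the reduction above yields a $\dtime(2^{\eps \cdot g(n)})$ algorithm for the encoded $\circsat$ problem, hence (since the reduction is cheap) an algorithm for $\circsat$ running in time $\poly(n) \cdot 2^{\eps \cdot g(n)}$. To finish I need $2^{\eps \cdot g(n)} \leq \poly(n) \cdot 2^{\eps' \cdot m}$ for a suitable $\eps'$; this holds because $g(n) \leq n = \Theta(m \log m)$, so $\eps \cdot g(n) \leq \eps \cdot c \cdot m \log m$, which is \emph{not} directly $O(m)$ — so the naive bound $g(n)\le n$ is too weak. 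The fix is to choose $g$ more carefully: because the number of placeholders we actually need is only $v \leq m$, I would set $g$ to match the true placeholder count, so that $g(n) = O(m)$ for these instances, while still ensuring $g(n) \leq n$ and $g(n) = \omega(\log n)$ as a function of $n$ globally (the padding block of the encoding is filled with placeholders only up to the count dictated by $g$, but we must keep enough of them that $g(n)=\omega(\log n)$; since $m = \Theta(n/\log n) = \omega(\log n)$ this is automatic). Then $\eps \cdot g(n) \leq \eps \cdot c' \cdot m$, and taking $\eps$ small enough relative to the target $\eps'$ gives the desired $\circsat \in \dtime(\poly(n) \cdot 2^{\eps' \cdot m})$ for every $\eps' > 0$.

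The main obstacle is precisely this interplay between the input length $n$ (in which the $\dtiwi$ speed-up is phrased) and the gate count $m$ (in which the $\circsat$ bound is phrased): because $n = \Theta(m \log m)$ rather than $\Theta(m)$, a careless choice of the witness-length function $g$ loses a $\log m$ factor in the exponent. I expect the delicate step to be verifying that one can simultaneously (i) have the number of placeholder characters equal $\Theta(m)$ so the exponent comes out as $\eps' \cdot m$, (ii) keep $g(n) \leq n$ and $g$ non-decreasing and efficiently computable as required by Corollary~\ref{corollary:hardness2}, and (iii) keep $g(n) = \omega(\log n)$. All three are satisfiable because $m$ itself, viewed as a function of $n$ via the encoding, is $\Theta(n/\log n)$, which is non-decreasing, efficiently computable, bounded by $n$, and $\omega(\log n)$; so $g(n) := \lfloor n/\log n \rfloor$ (or a smoothed variant) works, and padding the placeholder block of each instance to exactly $g(n)$ placeholders — harmless since a satisfying assignment only needs to fill the first $v \leq g(n)$ of them, the rest being filled arbitrarily and ignored by $V$ — completes the argument. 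Everything else (linear-time recognizability of $U$ and $\Clo{U}$, polynomial-time circuit evaluation for $V$, cheapness of the reduction) is routine.
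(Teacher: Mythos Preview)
Your proposal is correct and takes essentially the same approach as the paper: both reduce $\circsat$ to $\dtiwi(\poly(n),g(n))$ via an encoding with $n=\Theta(m\log m)$, invoke Corollary~\ref{corollary:hardness2} with $g(n)=n/\log n$ (which is $\Theta(m)$, non-decreasing, $\le n$, and $\omega(\log n)$), and translate the resulting $2^{\eps\cdot g(n)}$ bound back to $\poly(n)\cdot 2^{\eps'\cdot m}$. The paper simply states the choice $g(n)=n/\log(n)$ up front rather than arriving at it through the false-start-then-fix narrative you present, but the content is the same.
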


\begin{proof}
Suppose that for all $\alpha > 1$, $\logcircsat \in \dtime(n^\alpha)$.  Letting $\lg x = \max\{1,\log(x)\}$ and applying Corollary~\ref{corollary:hardness2} with $g(n) = \frac{n}{\lg(n)}$ (which is well-computable), we conclude that
\[
  (\forall \eps > 0)\; \dtiwi(\poly(n), \frac{n}{\log(n)}) \subseteq \dtime(2^{\eps \cdot \frac{n}{\log(n)}}).
\]

Recall that we encode Boolean circuits so that $n = \Theta(m \cdot \log(m))$ where $m$ is the number of gates.  Therefore, $\frac{n}{\log(n)}$ is $\Theta(m)$.  Also, under reasonable encoding conventions, $\frac{n}{\log(n)}$ will actually be larger than the number of gates and inputs. (Note that one could always scale up the witness size by a rational constant factor if needed.) Hence,
\[
  \circsat \in \dtiwi(\poly(n), \frac{n}{\log(n)}).
\]
Therefore, $(\forall \eps > 0)\; \circsat \in \dtime(2^{\eps \cdot \frac{n}{\log(n)}})$.  It follows that
\[
  (\forall \eps > 0)\; \circsat \in \dtime(\poly(n) \cdot 2^{\eps \cdot m}).\qedhere
\]
\end{proof}

We now show that $\logcircsat$ cannot be decided in essentially-linear time unless $\ethypo$ fails.
Note that this is a conditional lower bound based on $\ethypo$ rather than the more common (and stronger) $\sethypo$ assumption.

\begin{theorem}\label{theorem:cond-lower-bound2}
If $\logcircsat \in \dtime(n^{\alpha})$ for every $\alpha > 1$, then $\ethypo$ is false.
\end{theorem}

\begin{proof}
Because $\cnfsat{3}$ is a special case of $\circsat$, Theorem~\ref{theorem:cond-lower-bound1} implies that
\[
  (\forall \eps > 0) \; \cnfsat{3} \in \dtime(\poly(n) \cdot 2^{\eps \cdot m})
\]
where $m$ is the number of bounded AND, OR, and NOT gates (which is approximately three times the number of clauses).  By applying the Sparsification Lemma \cite{ImpagliazzoTwo:2001,Lokshtanov2011:eth}, we get that
\[
  (\forall \eps > 0) \; \cnfsat{3} \in \dtime(\poly(n) \cdot 2^{\eps \cdot v})
\]
where $v$ is the number of variables.  It follows that $\ethypo$ is false.
\end{proof}

\subsection{Hardness for k-Clique}
We combine results from Subsection~\ref{subsection:simulating-machines-using-circuits} with the Third Speed-up Theorem to prove that the existence of essentially-linear time algorithms for $\logcircsat$ would imply that $k$-$\clique$ has essentially-linear time algorithms for all fixed $k$.
It is important to note that our construction is non-uniform meaning that we obtain differing algorithms that cannot necessarily be combined into a single efficient approach for solving $k$-$\clique$ on non-constant $k$.  As a result, our argument is not sufficient to conclude $\fpt = \WHier{1}$.

\begin{corollary}\label{corollary:parameterized-speedup}
If for every $\alpha > 1$ we have that $\logcircsat \in \dtime(n^{\alpha})$, then
\[
  \dtiwi(n, k \cdot \log(n)) \subseteq \dtime(n^{\alpha})
\]
for every $k \in \N$ and every $\alpha > 1$.
\end{corollary}

\begin{proof}
Follows by combining Corollary~\ref{corollary:hardness1} with Theorem~\ref{theorem:parameterized-speedup}.
\end{proof}

From this, we are able to obtain a meaningful connection between the $\logcircsat$ and $k{\text -}\clique$ problems.

\begin{theorem}\label{theorem:clique-speedup}
If for every $\alpha > 1$ we have that $\logcircsat \in \dtime(n^{\alpha})$, then
\[
  k{\text -}\clique \in \dtime(n^{\alpha})
\]
for every $k \in \N$ and every $\alpha > 1$.
\end{theorem}

\begin{proof}
The variable $n$ denotes the total length of the graph's encoding which is $\Theta((v + e) \cdot \log(v))$ where $v$ is the number of vertices and $e$ is the number of edges.
We observe that for all fixed $k$, we have
\[
  k{\text -}\clique \in \dtiwi(n, k \cdot \log(n)).
\]
We combine this observation with Corollary~\ref{corollary:parameterized-speedup} to obtain the desired result.
\end{proof}


\begin{remark}
Although we do not focus on parameterized complexity theory here, the preceding arguments can also be used to show that if $\logcircsat$ is decidable in essentially-linear time, then $\WHier{1} \subseteq \text{non-uniform-}\fpt$.  Moreover, we suggest that this implication could be extended to $\WHier{\PTIME} \subseteq \text{non-uniform-}\fpt$.  We refer the reader to \cite{Abrahamson:1995} for background on $\WHier{\PTIME}$ and the $\whier$ hierarchy.
\end{remark}

%
%

\section{Conclusion}
We have demonstrated superlinear conditional lower bounds for the $\logcircsat$ decision problem by carefully investigating properties of limited nondeterminism.  In particular, in Theorem~\ref{theorem:cond-lower-bound2} we showed that the existence of essentially-linear time Turing machines for $\logcircsat$ would imply that $\ethypo$ is false.  This means that a small runtime improvement for $\logcircsat$ would lead to a substantial runtime improvement for $\NPTIME$-complete problems.  Through this investigation we revealed new relationships between deterministic and nondeterministic computations.

We leave two important questions unanswered that we hope will inspire future work.

\begin{question}

Would the existence of essentially-linear time random access machines for $\logcircsat$ imply that $\ethypo$ is false?  This question is related to whether linear time for random access machines can be simulated in subquadratic time by multitape Turing machines~\cite{Cook:1973}.  It is also related to whether random access machines can be made oblivious~\cite{Gurevich1989:nearly}.

\end{question}


\begin{question}

Can the construction from the first speed-up theorem (Theorem~\ref{theorem:iterative1}) be carried out for a non-constant number $k$ of iterations?  We speculate that if it can, then $\dtiwi(n, \log(n)) \subseteq \dtime(n \cdot \log(n))$ would imply that $\ntime(n) \subseteq \dtime(2^{\sqrt{n}})$.

\end{question}

In addition, although this work does not focus on circuit lower bounds, we suggest that recent results connecting the existence of faster algorithms with circuit lower bounds \cite{Abboud:2016,Williams2014:nonuniform,Williams:2013,Viola:2014} could be applied to show that the existence of faster algorithms for $\logcircsat$ would imply new circuit lower bounds for $\cc{E}^{\NPTIME}$ as well as other complexity classes.

Finally, we leave the reader with the thought that the speed-up theorems for limited nondeterminism (Theorems~\ref{theorem:iterative1},~\ref{theorem:speedup}, and~\ref{theorem:parameterized-speedup}) might be special cases of a more general speed-up result connecting nondeterminism, alternation, and time.

\paragraph{Acknowledgements}
We greatly appreciate the help and suggestions that we received.  We are especially grateful to Kenneth Regan and Jonathan Buss who shared a manuscript \cite{Buss:2014} on speed-up results relating time and space.  We also thank Michael Fischer, Mike Paterson, and Nick Pippenger, who tracked down two manuscripts related to circuit simulations.  In addition, we thank Karl Bringmann whose advice helped us to better align this work with recent advances in fine-grained complexity.  Likewise, we recognize helpful discussions with Henning Fernau and all of the participants at the workshop on Modern Aspects of Complexity Within Formal Languages (sponsored by DFG).  Finally, we very much appreciate all of the feedback from Joseph Swernofsky, suggestions from Ryan Williams, and comments from anonymous referees.

\bibliographystyle{plainurl}
\bibliography{references}

\appendix

\end{document}